\documentclass[11pt]{amsart}
\usepackage[top=1in,bottom=1in,left=1in,right=1in]{geometry}
\usepackage{amsmath}
\usepackage{amsfonts}
\usepackage{amsthm}
\usepackage[foot]{amsaddr}
\usepackage{mathrsfs}  
\usepackage{tikz}
\usepackage{subcaption}
\usepackage{algorithm,algorithmic}

\newtheorem{theorem}{Theorem}

\newtheorem{lemma}{Lemma}

\title{The Arithmetic Circuit Combinatorial Nullstellensatz is NP-hard}

\author{Andreas Bj\"{o}rklund, Sweden}

\begin{document}
 \maketitle
   
\begin{abstract}
A multivariate polynomial on $n$ variables $x_1,\ldots,x_n$ of total degree $n$ over $\mathbf{Z}_2$ containing the multilinear monomial $\prod_{i=1}^n x_i$ is by the combinatorial nullstellensatz [Alon, Comb. Probab. Comput., 1999] known to always have a nonroot.
We show that there cannot be a randomised polynomial time algorithm that given an arithmetic circuit of polynomial size formally computing such a polynomial, locates a nonroot with constant nonzero probability unless RP=NP. The result holds even when the individual degree of every variable in the input polynomial is at most two. 
\end{abstract}

\section{Introduction}
The \emph{combinatorial nullstellensatz} by Alon~\cite{Alon1999} has found numerous applications in existence proofs in combinatorics, see e.g.~\cite{Alon2000, AlonBCMS2021}. It entails proving a given combinatorial structure exists by capturing its properties in a multivariate polynomial, and then arguing that a constrained nonroot to the polynomial always exists, thereby witnessing the structure.
 
\begin{theorem}[Second Combinatorial Nullstellensatz, Theorem 1.2 in Alon~\cite{Alon1999}]
\label{thm: null-general}
Let $\mathbf{F}_q$ be a finite field, and let $f=f(x_1,\ldots,x_n)$ be a polynomial in $\mathbf{F}_q[x_1,\ldots, x_n]$. Suppose the degree $\operatorname{deg}(f)$ of $f$ is $\sum_{i=1}^n t_i$, where each $t_i$ is a nonnegative integer, and suppose the coefficient of $\prod_{i=1}^n x_i^{t_i}$ in $f$ is nonzero. Then, if $S_1,\ldots,S_n$ are subsets of $\mathbf{F}_q$ with $|S_i|>t_i$, there are $a_1\in S_1,\ldots, a_n\in S_n$ so that
\[
f(a_1,\ldots, a_n)\neq 0.
\]
\end{theorem}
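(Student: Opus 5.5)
We prove Theorem~\ref{thm: null-general} by reducing $f$ modulo the ideal of polynomials vanishing on the grid $S_1\times\cdots\times S_n$, and then using the fact that a polynomial of low individual degrees vanishing on a grid is identically zero. First, shrink each $S_i$ to a subset of size exactly $t_i+1$. This is allowed because a nonroot found in the smaller grid is also a nonroot in the original one. For each $i$ let
\[
g_i(x_i)=\prod_{s\in S_i}(x_i-s)=x_i^{t_i+1}-\sum_{j=0}^{t_i} b_{ij}x_i^j .
\]
Suppose for contradiction that $f$ vanishes on every point of $S_1\times\cdots\times S_n$.

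The first step is a reduction. In $f$, repeatedly replace any occurrence of $x_i^{e}$ with $e>t_i$ by $x_i^{e-t_i-1}\sum_{j} b_{ij}x_i^j$. This substitution does not change the value of $f$ at any point of the grid, since $g_i$ vanishes on $S_i$. Each replacement strictly lowers the degree in $x_i$, so the process terminates. It produces a polynomial $\bar f$ in which every $x_i$ has degree at most $t_i$, and $\bar f$ agrees with $f$ on the grid.

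The key observation is that the monomial $\prod_i x_i^{t_i}$ keeps its nonzero coefficient in $\bar f$. Each replacement turns a monomial of total degree $d$ into a combination of monomials of total degree strictly less than $d$. Since $\deg f=\sum_i t_i$, every monomial of $f$ other than those already reduced has total degree at most $\sum t_i$. A monomial that needs reduction has some exponent exceeding $t_i$, so every monomial it produces has total degree below $\sum t_i$. Hence none of these can equal $\prod x_i^{t_i}$. The coefficient of $\prod x_i^{t_i}$ is therefore unchanged, and $\bar f\neq 0$. Getting this degree bookkeeping right is the main point where the hypothesis $\deg f=\sum t_i$ is used, and I expect it to be the only delicate step.

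The final step is a lemma: a polynomial $h$ with $\deg_{x_i}h\le t_i$ that vanishes on a grid with $|S_i|=t_i+1$ is the zero polynomial. The proof is by induction on $n$. For $n=1$, a nonzero univariate polynomial of degree at most $t_1$ has at most $t_1$ roots. For the inductive step, write $h=\sum_{j\le t_n} h_j(x_1,\ldots,x_{n-1})x_n^j$. Fix any point $(a_1,\ldots,a_{n-1})$ of the smaller grid. The resulting univariate polynomial in $x_n$ has degree at most $t_n$ and vanishes on all $t_n+1$ points of $S_n$, so all its coefficients $h_j(a_1,\ldots,a_{n-1})$ are zero. Thus each $h_j$ vanishes on the $(n-1)$-dimensional grid, and by induction each $h_j=0$.

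Applying the lemma to $\bar f$, which vanishes on the grid, gives $\bar f=0$. This contradicts $\bar f\neq 0$ from the reduction step. Hence some point $(a_1,\ldots,a_n)$ with $a_i\in S_i$ satisfies $f(a_1,\ldots,a_n)\neq 0$.
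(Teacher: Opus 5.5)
The paper gives no proof of this theorem; it quotes it from Alon~\cite{Alon1999} as background, so there is no in-paper argument to match. Your proof is correct and is essentially Alon's original one. The reduction modulo $g_i(x_i)=\prod_{s\in S_i}(x_i-s)$ is the content of his first Nullstellensatz, which writes $f=\sum_i h_ig_i+\bar f$ with $\deg(h_ig_i)\le\deg f$. Your final induction is his lemma that a polynomial with $\deg_{x_i}h\le t_i$ vanishing on a grid with $|S_i|=t_i+1$ is zero. Your degree bookkeeping is right: every monomial ever present has total degree at most $\sum_i t_i$, and each replacement strictly lowers total degree, so $\prod_i x_i^{t_i}$ is never removed or created. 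Nothing in the argument uses finiteness of the field.

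The nearest argument in the paper is the proof of Lemma~\ref{lem: parity}. It covers the case $\mathbf{F}_q=\mathbf{Z}_2$, $t_i=1$, $S_i=\{0,1\}$, i.e.\ Theorem~\ref{thm: null Z_2}. Its first half is your reduction with $g_i=x_i^2+x_i$, together with the observation that $\deg f=n$ protects the coefficient of $\prod_i x_i$. The paper asserts that in one line, and your bookkeeping is its justification.

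The finishing step differs. Instead of showing that a nonzero reduced polynomial cannot vanish on the grid, the paper uses M\"obius inversion to write the top coefficient of $f^*$ as the sum of its values over $\{0,1\}^n$, i.e.\ the parity of the number of nonroots. That gives an exact identity rather than mere existence, and the paper genuinely needs it. Its reduction requires that an odd number of nonroots (property P3) forces the coefficient of $\prod_i x_i$ to be one, which your vanishing lemma alone does not give.

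The general-field version of the paper's route is the coefficient formula $[\prod_i x_i^{t_i}]f=\sum_{a\in S_1\times\cdots\times S_n} f(a)/\prod_i g_i'(a_i)$, valid for $\deg f\le\sum_i t_i$ and $|S_i|=t_i+1$ (Laso\'n; Karasev and Petrov). It follows from your reduced $\bar f$ by Lagrange interpolation. Over $\mathbf{Z}_2$ with $S_i=\{0,1\}$ one has $g_i'\equiv 1$, which recovers Lemma~\ref{lem: parity}. Your route is shorter and proves the existence statement in full generality; the interpolation route additionally counts.
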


Hence, the computational search problem of locating a nonroot always has a solution. One fascinating aspect of Theorem~\ref{thm: null-general}, as pointed out in the original paper~\cite{Alon1999}, is that there is no known efficient algorithm for this task when the input polynomial is described implicitly, e.g. by a polynomial size arithmetic circuit.  Verifying a purported nonroot though, can be done in deterministic polynomial time by evaluating the polynomial through its input circuit description in the given point. Note that these problems (depending on the precise input specifications) are \emph{not} expected to be in the complexity class TFNP of total search problems in FNP, defined by Megiddo and Papadimitriou~\cite{MegiddoP1991}, as they are promise problems. The crux is that the input describes a polynomial that is guaranteed to contain a certain monomial, and verifying this is itself in general a hard problem (we discuss this further in the related work section~\ref{sec: related work}). 

In this paper, we focus on the above search problem when the input polynomial is given by an arithmetic circuit over $\mathbf{Z}_2$ computing the polynomial. 
Over $\mathbf{Z}_2$, the interesting version of theorem~\ref{thm: null-general} is
\begin{theorem}[Combinatorial Nullstellensatz over $\mathbf{Z}_2$]
\label{thm: null Z_2}
Let $f=f(x_1,\ldots,x_n)$ be a polynomial in $\mathbf{Z}_2[x_1,\ldots, x_n]$. Suppose the degree $\operatorname{deg}(f)$ of $f$ is $n$, and suppose the coefficient of $\prod_{i=1}^n x_i$ in $f$ is nonzero. Then, there exists $(a_1,\ldots,a_n)\in \{0,1\}^n$ so that
\[
f(a_1,\ldots, a_n)\neq 0.
\]
\end{theorem}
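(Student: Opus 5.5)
This statement is the special case of Theorem~\ref{thm: null-general} with $q=2$, $t_i=1$ for all $i$, and $S_i=\mathbf{Z}_2=\{0,1\}$. The plan is to check that the hypotheses match and then invoke that theorem. Because the case is so simple, I would also record a short self-contained argument via multilinearisation, which is the form used later in the paper.

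\textbf{Reduction to Theorem~\ref{thm: null-general}.}
\begin{itemize}
\item We have $\deg(f)=n=\sum_{i=1}^n t_i$ with every $t_i=1$.
\item The coefficient of $\prod_{i=1}^n x_i^{t_i}=\prod_{i=1}^n x_i$ is nonzero by assumption.
\item Each $S_i=\{0,1\}$ satisfies $|S_i|=2>1=t_i$.
\end{itemize}
Theorem~\ref{thm: null-general} then gives $(a_1,\ldots,a_n)\in\{0,1\}^n$ with $f(a_1,\ldots,a_n)\neq 0$, which is exactly the claim.

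\textbf{Direct argument.} On $\{0,1\}^n$ we have $x_i^2=x_i$. So replace every power $x_i^{e}$ with $e\ge 1$ by $x_i$, giving the multilinear reduction $\tilde f$. This $\tilde f$ agrees with $f$ on every point of $\{0,1\}^n$.

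The key step is that the coefficient of $\prod_{i=1}^n x_i$ in $\tilde f$ equals its coefficient in $f$.
\begin{itemize}
\item Reduction never increases degree.
\item Any monomial of $f$ other than $\prod_i x_i$ that reduces to $\prod_i x_i$ would have to contain all $n$ variables with some exponent at least $2$.
\item Such a monomial has degree greater than $n=\deg(f)$, which is impossible.
\end{itemize}
Hence $\tilde f$ is a nonzero multilinear polynomial; it contains the monomial $\prod_i x_i$ with coefficient $1$.

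It remains to show a nonzero multilinear polynomial over $\mathbf{Z}_2$ is not identically zero on $\{0,1\}^n$. Two routes work:
\begin{itemize}
\item Induct on $n$ by writing $\tilde f=x_n g+h$, where $g$ and $h$ are multilinear in $x_1,\ldots,x_{n-1}$.
\item Use M\"obius inversion: for the top monomial, $\sum_{a\in\{0,1\}^n} \tilde f(a)$ equals the coefficient of $\prod_i x_i$, which is $1$, so some term in the sum is nonzero.
\end{itemize}

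The only point needing care is the degree-counting step: confirming that no other monomial collapses onto $\prod_i x_i$ under reduction and cancels it. The hypothesis $\deg(f)=n$ is what rules this out. Everything else is routine.
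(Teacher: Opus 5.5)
Your proposal is correct and matches the paper. The paper also presents this theorem as the $q=2$, $t_i=1$, $S_i=\{0,1\}$ case of Alon's theorem, and derives it from Lemma~\ref{lem: parity}. That lemma's proof is exactly your multilinear reduction (the top coefficient is preserved because $\deg(f)=n$), followed by M\"obius inversion showing that this coefficient equals the parity of the number of nonroots.
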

In fact, it is well known that the coefficient of the maximum multilinear monomial $\prod_{i=1}^n x_i$ is equal to the parity of the number of nonroots (see e.g.~Belovs \emph{et al.}~\cite{BelovsIQSY2017}, and our Lemma~\ref{lem: parity}). This connection implies the above theorem, because an odd number of nonroots means at least one.

We next describe how the input polynomial could be represented. An \emph{arithmetic circuit} $C$ over a ring $\mathbf{R}$ operates on a set of variables $x_1,\ldots,x_n$ and the scalars in $\mathbf{R}$, and computes a multivariate polynomial $f(x_1,\ldots,x_n)$.
It is a directed acyclic graph in which every node of indegree zero at the bottom is called an \emph{input gate} and is labelled by either a variable or a field scalar. Every other node has indegree two and is either a \emph{sum gate} ($+$), and computes the sum of its inputs, or a \emph{product gate} ($\times$), and computes the product of its inputs. A single node with outdegree zero at the top formally computes the polynomial by evaluating the gate nodes in the circuit bottom-up.
By the size of a circuit we mean the number of sum and product gate nodes in the directed acyclic graph. If we give unique labels to all inputs and all gates' output signals, we can describe the circuit simply as a list of gates, with for each gate its type $+/ \times$, its two input signal labels, and its output label.

There is already a strong conditional hardness result for the combinatorial nullstellensatz over $\mathbf{Z}_2$ Theorem~\ref{thm: null Z_2} in the literature. Alon~\cite{Alon2002} showed that a general polynomial time algorithm that takes as input a polynomial size arithmetic circuit computing a multilinear polynomial in the Boolean ring $\mathbf{Z}_2[x_1,\ldots,x_n]/\left(x_1^2+x_1,\ldots,x_n^2+x_n\right)$ containing the multilinear monomial $\prod_{i=1}^n x_i$, is unlikely. It would imply that there are no one-way permutations, one-to-one functions that are computationally easy to compute but hard to invert.

We provide another conditional hardness result, that in particular shows that an efficient algorithm for the arithmetic circuit combinatorial nullstellensatz implies efficient randomised algorithms for all problems in NP. Our contribution is two-fold:

\begin{enumerate}
\item[1]  Our input polynomial is given by a polynomial size arithmetic circuit \emph{formally} representing the polynomial in $\mathbf{Z}_2[x_1,\ldots,x_n]$. The difference to Alon's setting~(Section 5 in \cite{Alon2002}), is that there it is only required that the multilinear polynomial $f(x_1,\ldots,x_n)$ is \emph{pointwise} represented by an arithmetic circuit $C_f$ over $\mathbf{Z}_2$ on $\{0,1\}^n$, i.e., for all $(a_1,\ldots,a_n)\in \{0,1\}^n$, it holds that $f(a_1,\ldots,a_n)=C_f(a_1,\ldots, a_n)$. Indeed, Alon's input circuit is defined as computing a polynomial in the Boolean ring $\mathbf{Z}_2[x_1,\ldots,x_n]/\left(x_1^2+x_1,\ldots,x_n^2+x_n\right)$, whereas we define the circuit to compute a polynomial in the ring $\mathbf{Z}_2[x_1,\ldots,x_n]$. The circuit $C_f$ in the construction in~\cite{Alon2002} without the repeated substitutions $x_i^2\rightarrow x_i$, formally computes a polynomial in $\mathbf{Z}_2[x_1,\ldots,x_n]$ that is not itself multilinear and is of higher degree than $n$. 

\item[2] We give a randomised polynomial time reduction that we call \emph{compact vanishing or odd}, that given an instance $\mathcal I$ to an NP-complete problem constructs an arithmetic circuit computing a multivariate polynomial $f$ with $n$ variables $x_1,\ldots,x_n$, with $n$ a polynomial in the instance size, with three properties: 

\begin{enumerate}
\item[i] (compactness) $f$ has total degree $n$ and individual degree at most two of each variable,
\item[ii] (vanishing negatives) when $\mathcal I$ is a ``no''-instance, $f$ evaluates to zero everywhere on $\{0,1\}^n$, 
\item[iii] (odd positives) when $\mathcal I$ is a ``yes''-instance, $f$ has an odd number of nonroots with probability at least $1/\operatorname{poly}(n)$.

\end{enumerate}
\end{enumerate}

We prove 

\begin{theorem}[Main result: Hardness for the arithmetic circuit combinatorial nullstellensatz]
\label{thm: main}
Let $f=f(x_1,\ldots,x_n)$ be a polynomial in $\mathbf{Z}_{2}[x_1,\ldots, x_n]$ and let $C_f$ be an arithmetic circuit of polynomial size in $n$ that formally computes $f$. Suppose the total degree $\operatorname{deg}(f)=n$, the individual degrees of all $x_i$ are at most two, and suppose the coefficient of $\prod_{i=1}^n x_i$ in $f$ is nonzero. If there is a, possibly randomised, algorithm that given $C_f$ finds $(a_1,\ldots, a_n)\in \{0,1\}^n$, in time $\operatorname{poly}(n)$ and with nonzero constant probability, so that
\[
f(a_1,\ldots, a_n)\neq 0,
\]
then RP=NP.
\end{theorem}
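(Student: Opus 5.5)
The plan is a randomised reduction from SAT, used together with the hypothesised algorithm $\mathcal A$ as a one-sided-error decision procedure. Given a formula $\phi$, I will build in polynomial time, using randomness, a circuit $C_f$ over $\mathbf{Z}_2$ on $n=\operatorname{poly}(|\phi|)$ variables. It should satisfy the three properties (i)–(iii) of the \emph{compact vanishing or odd} reduction from the introduction.

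The decision procedure is then simple. Run $\mathcal A$ on $C_f$, stopping it after its polynomial time bound, and evaluate $C_f$ at the returned point $a$. Accept iff $f(a)\neq 0$.

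\begin{itemize}
\item \emph{No-instances.} By property (ii), $f$ vanishes on all of $\{0,1\}^n$, so we never accept. This holds whatever $\mathcal A$ does on inputs outside its promise, because we verify the output ourselves.
\item \emph{Yes-instances.} With probability at least $1/\operatorname{poly}(n)$, $f$ has an odd number of nonroots. By Lemma~\ref{lem: parity}, the coefficient of $\prod_i x_i$ is then $1$. Together with property (i) (total degree $n$, individual degrees at most two), $C_f$ satisfies the hypotheses of Theorem~\ref{thm: main}. So $\mathcal A$ returns a nonroot with constant probability.
\end{itemize}

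Repeating polynomially many times with fresh randomness amplifies the acceptance probability to at least $1/2$. This places SAT in RP, hence RP=NP.

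For the reduction I would first apply Valiant–Vazirani isolation. Adding random $\mathbf{Z}_2$-linear constraints to $\phi$ gives a formula $\phi'$ with the following behaviour: if $\phi$ is unsatisfiable, so is $\phi'$; if $\phi$ is satisfiable, then with probability $\Omega(1/|\phi|)$ the formula $\phi'$ has exactly one satisfying assignment. Next I convert $\phi'$ to a circuit-SAT form with fan-in-two gates and one fresh variable per gate. The gate variables are forced, so the number of satisfying full assignments equals the number of satisfying assignments of $\phi'$. That number is $0$ or, with good probability, exactly $1$; an odd count is all we need. I would then arithmetise each local constraint by its $\{0,1\}$-indicator polynomial and let $f$ be the product of these indicators. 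The nonroots of $f$ are then exactly the consistent satisfying assignments, which gives properties (ii) and (iii).

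The main obstacle is compactness (i): $\operatorname{deg} f$ must equal the number of variables exactly, with individual degree at most two. A naive product of indicators has degree roughly (number of constraints)$\times$(constraint degree), which exceeds $n$. Each AND gate $g=ab$ contributes the indicator $1+g+ab$, of degree two, for only one new variable. The padding moves available are limited:

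\begin{itemize}
\item Multiplying by a fresh variable $w$ raises the degree and the variable count by one each, so it gains nothing.
\item Adding a fresh variable that does not appear in $f$ doubles the number of nonroots and destroys oddness.
\end{itemize}

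My first attempt would therefore change the local gadgets so that each one is degree-balanced. Each constraint gets as many private auxiliary variables as its indicator has degree. The auxiliary variables are uniquely determined by a consistent assignment, or come in an odd number of completions, so parity is preserved. Then the degrees add up to exactly the total variable count. Concretely, I would try rewriting each gate check $g=ab$ via an auxiliary $u$ with constraints $u=a\cdot b$ and $g=u$, each linear or bilinear. Variables shared between gates can be squared at most once, which keeps individual degree at most two. Finally, I would adjust any remaining slack by padding so that total degree $n$ is attained exactly by the monomial $\prod_i x_i$.
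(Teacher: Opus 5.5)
Your outer argument (run $\mathcal A$, verify its output yourself, use Lemma~\ref{lem: parity} to turn odd parity into the monomial hypothesis, amplify) is exactly the paper's Section~\ref{sec: overall}. The gap is the compact vanishing-or-odd reduction itself, which is the whole content of the theorem, and your route to compactness cannot work.

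Your concrete rewrite already fails. The indicators $1+u+ab$ and $1+g+u$ have total degree $3$ for the two new variables $u,g$, which is the same deficit of one as $1+g+ab$ alone. This deficit is forced for any gadget with private auxiliaries. Suppose $F(a,b,g,u_1,\ldots,u_k)$ has an odd number of $u$-completions iff $g=ab$. Over $\mathbf{Z}_2$ the sum $\sum_{u\in\{0,1\}^k}F$ keeps only monomials divisible by $u_1\cdots u_k$, so its degree is at most $\operatorname{deg}F-k$. This sum also agrees on $\{0,1\}^3$ with $1+g+ab$, whose multilinear form has degree $2$, so $\operatorname{deg}F\ge k+2$.

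Degrees add under products, and private auxiliaries factor out of the sum. Hence every AND constraint costs one more degree than the variables it brings, and so do the output constraint and each isolating linear constraint. Only primary inputs add a variable without adding degree. So compactness still forces (number of AND gates) $+$ (number of such constraints) $\le$ (number of primary inputs), which is the $s^\times(C_f)=n-1$ bottleneck of Section~\ref{sec: prev vanishing}.

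Padding cannot remove this excess: a fresh unused variable closes the gap but doubles the nonroots, as you note. There is also never slack to fill, because degree below the number of variables already gives an even number of nonroots by the argument of Lemma~\ref{lem: parity}.

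Worse, suppose a product of plain AND/XOR/linear indicators did have degree exactly its number of variables. Then the coefficient of $\prod x_i$ would depend only on the top-degree parts, i.e.\ on $\prod_j a_jb_j$ times a product of linear forms. That coefficient is $0$ or a determinant over $\mathbf{Z}_2$, so such a reduction would put SAT in RP without using $\mathcal A$ at all.

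The missing idea is to get compactness from a global algebraic object instead of a product of local nonlinear constraints. The paper uses $f=[y^{w^*}]\operatorname{det}(M_{s,t}(x))$, where $M_{s,t}(x)$ is the $n\times n$ matrix of a cubic bipartite planar graph with affine-linear entries $y^{w(\{u,v\})}(x_u+x_v+p_{s,t}((u,v))+1)$. Every Leibniz term is then a product of $n$ \emph{linear} realisability constraints on $n$ variables, so $\operatorname{deg} f\le n$ comes for free. Individual degrees are at most two because each vertex lies on two arcs of a cycle cover. The nonlinearity, and with it the NP-hardness, sits in the sum over permutations.

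The remaining properties come as follows:
\begin{itemize}
\item A Pfaffian orientation with $(s,t)$ reversed leaves only Hamiltonian cycles through $\{s,t\}$ realisable (Lemmas~\ref{lem: cycle without e*} and~\ref{lem: realisable}), which gives vanishing negatives.
\item The entry $y^{w(\{t,s\})}(x_s+1)$ pins $x_s=0$, so each such cycle gives exactly one nonroot.
\item Random weights with the isolating lemma, a guessed $w^*$ and a random $e^*$ defeat the evenness from Smith's theorem.
\item Berkowitz's circuit with coefficient tracking in $y$ yields a polynomial-size $C_f$.
\end{itemize}
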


Note that we make no assumption on what the hypothetical algorithm would output when presented with an arithmetic circuit that computes a polynomial \emph{not} containing the maximum degree multilinear monomial $\prod_{i=1}^n x_i$. It may very well produce a nonroot even when this monomial is lacking. 
Moreover, the individual degree bound of two is best possible since for nonzero multilinear polynomials, efficient algorithms locating a nonroot exists, see the next section. 
 
\subsection{Related work}
\label{sec: related work}
The complexity class TFNP (Total Functions from NP) was introduced by Megiddo and Papadimitriou~\cite{MegiddoP1991}. It is defined as follows: Assume an alphabet $\Sigma$ with at least two symbols, and a relation $R\subseteq \Sigma^* \times \Sigma^*$ that is polynomially balanced ($(x,y)\in R$ implies $|y|\leq p(|x|)$ for some polynomial $p$), and polynomial time recognisable
(we can check if $(x,y)\in R$ in polynomial time). The relation $R$ is \emph{total} if for every $x\in \Sigma^*$, there is always a $y\in \Sigma^*$ such that $(x,y)\in R$. TFNP is the class of problems with such relations. Our search problem in Theorem~\ref{thm: main} is not known to be in this class as its input is assumed to contain the maximum multilinear monomial, which is hard to verify in general. Indeed, computing the coefficient of the multilinear monomial containing all variables of a polynomial over $\mathbf{Z}_2$ is $\oplus$P-hard (proved e.g. in Proposition 7, Belovs \emph{et al.}~\cite{BelovsIQSY2017}), which in particular means that there are no known efficiently verifiable proofs that can be guessed and validated by a nondeterministic polynomial time verifier.
  There does however exist a restricted version of our problem, called \textsc{PPA-Circuit CNSS}, (where CNSS stands for Combinatorial Nullstellensatz), which is PPA-complete and hence is in TFNP, see Belovs \emph{et al.}~\cite{BelovsIQSY2017}.
In this version of the problem, the input polynomial is implicitly given as an input to an arbitrary PPA-problem, as a so-called PPA-circuit composition of the edge identification circuit and the pairing circuit of the PPA problem. However, it does not as in our case allow arbitrary polynomial size arithmetic circuits describing a degree $n$ polynomial as inputs.

Our main result Theorem~\ref{thm: main} can be contrasted by the fact that there is a simple and efficient algorithm finding nonroots in multilinear polynomials. Aggrawal and Thierauf~\cite{AggrawalT2000}, implicitly showed that if we are given an arithmetic circuit of polynomial size that formally computes a nonzero multilinear polynomial $f$ in $\mathbf{Z}_{2}[x_1,\ldots, x_n]$, we can find a nonroot $(a_1,\ldots, a_n)\in \{0,1\}^n, f(a_1,\ldots,a_n)\neq 0$ in randomised polynomial time. The idea is to view the arithmetic circuit as computing the polynomial over an \emph{extension field} $\mathbf{Z}_{2^\kappa}$ of $\mathbf{Z}_2$, e.g., by setting $\kappa=2\lceil \log_2 n \rceil$ (they attribute the general idea of using an extension field to Ibarra and Moran~\cite{IbarraM1983}). Then, by the DeMillo-Lipton-Schwartz-Zippel lemma~\cite{DeMilloL1978,Schwartz1980,Zippel1979}, the polynomial is nonzero in a uniformly sampled point in $\mathbf{Z}_{2^\kappa}^n$ with probability at least $1-\frac{1}{n}$. 

The algorithm is as follows: By uniform random sampling, we find a nonroot $a\in\mathbf{Z}_{2^\kappa}^n$ in the extension field.
Using self reduction,  we next turn the found nonroot $a$ into one that lies in $\{0,1\}^n$, by gradually changing the assignment one variable at a time. Formally, we will create a sequence of nonroots $a=a^{(0)},\ldots,a^{(\ell)}\in \mathbf{Z}_{2^\kappa}^n$, in which each nonroot differs in value from the previous one in exactly one variable, and the new variable value is contained in $\mathbf{Z}_2$ whereas the old value was not. Eventually, all variable values in the assignment lie in $\{0,1\}$.

To see in detail how this works, consider step $i$ in the sequence creation, and let $j$ be the smallest index such that $a^{(i)}_j\in \mathbf{Z}_{2^\kappa}\setminus \mathbf{Z}_2$. Rewrite the multivariate polynomial as a single variable polynomial in $x_j$ as 
\[
f_j(x_j)=c_0+c_1x_j,
\]
where $c_k$ are the coefficients obtained be replacing $x_l,l\neq j$ for $a^{(i)}_l$ in $f(x)$. Note that we have $f_j (x_j)\neq 0$, since $a^{(i)}$ is a nonroot to $f(x)$ in $\mathbf{Z}_{2^\kappa}$. This means some $c_k$ must be nonzero, and $f_j(x_j)$ is a nonzero linear polynomial that has at most one root. This in turn means that for at least some $g\in \{0,1\}$, $f_j(g)\neq 0$. Set $a^{(i+1)}_l=a^{(i)}_l$ for $l\neq j$, and $a^{(i+1)}_j=g$ for the smallest such $g$, to obtain a new nonroot $a^{(i+1)}$ with one less variable value outside $\mathbf{Z}_2$. Note that to find the value $g$ at each step, we simply have to loop over $g\in \{0,1\}$ and evaluate $f_j(g)$ with the help of the arithmetic circuit.

\section{The high-level NP-hardness of the arithmetic circuit combinatorial nullstellensatz}
\label{sec: high-level proof}
In this section we give the high-level proof Theorem~\ref{thm: main}. Our proof strategy is to describe an embedding of an arbitrary instance $\mathcal{I}$ to a NP-complete problem in a probabilistic arithmetic circuit $C_f$. The circuit computes a multivariate polynomial $f(x_1,\ldots,x_n)$ in $\mathbf{Z}_2[x_1,\ldots,x_n]$, where $n$ is polynomial in the size of $\mathcal{I}$, with three properties.

\vspace{5mm}
\begin{enumerate}
\item[P1] (Compactness) The degree $\operatorname{deg}(f)$ is at most $n$ and the individual degrees are at most two, 
\item[P2] (Vanishing negatives) when $\mathcal{I}$ is a ``no''-instance, $f$ has no nonroots,
\item[P3] (Odd positives) when $\mathcal{I}$ is a ``yes''-instance, with probability at least $1/\operatorname{poly}(n)$ the polynomial has an odd number of nonroots.
\end{enumerate}
\vspace{5mm}
The circuit $C_f$ will be of size polynomial in $n$, and the reduction runs in time polynomial in $n$. We call such a construction a \emph{compact vanishing or odd} reduction.

Before we describe such a reduction in Section~\ref{sec: cvo}, we first give the high-level proof of Theorem~\ref{thm: main} assuming one exists. We describe how we given a compact vanishing or odd reduction can use the hypothetical algorithm in Theorem~\ref{thm: main}, in the following called algorithm $\mathcal{A}$, to construct a polynomial time Monte Carlo algorithm without false positives for any problem in NP.

We first establish what the coefficient of the full multilinear monomial $\prod_{i=1}^n x_i$ counts in the polynomial $f$. Let $\mathcal P_n$ be the set of polynomials $f\in \mathbf{Z}_2[x_1,\ldots,x_n]$ of degree $\operatorname{deg}(f)=n$. 
The following fact relating the coefficient of the monomial $\prod_{i=1}^n x_i$ and the parity of the number of nonroots in a polynomial in $\mathcal P_n$ is well known (see e.g.~\cite{BelovsIQSY2017}):
\begin{lemma}
\label{lem: parity}
The coefficient of the monomial $\prod_{i=1}^n x_i$ in a polynomial $f\in \mathcal{P}_n$ is equal to the parity of the number of nonroots $(a_1,\ldots,a_n)\in\{0,1\}^n,f(a_1,\ldots,a_n)\neq 0$.
\end{lemma}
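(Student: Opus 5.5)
We want to show that for $f\in\mathcal P_n$ (degree at most $n$, as used in P1), the coefficient of $\prod_{i=1}^n x_i$ equals $\sum_{a\in\{0,1\}^n} f(a)$ computed in $\mathbf{Z}_2$. Over $\mathbf{Z}_2$ every value $f(a)$ is $0$ or $1$, so this sum is exactly the parity of the number of nonroots. The plan is to evaluate the sum monomial by monomial, using linearity.

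First I would expand $f=\sum_{e} c_e \prod_i x_i^{e_i}$ over exponent vectors $e$ with $\sum_i e_i\le n$. Then
\[
\sum_{a\in\{0,1\}^n} f(a)=\sum_e c_e \prod_{i=1}^n \Bigl(\sum_{a_i\in\{0,1\}} a_i^{e_i}\Bigr).
\]

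Next I would compute the one-variable factor. Since $0^0=1$, the case $e_i=0$ gives $1+1=0$ in $\mathbf{Z}_2$. The case $e_i\ge 1$ gives $0+1=1$. So the monomial indexed by $e$ contributes $c_e$ exactly when every $e_i\ge 1$, and contributes $0$ otherwise.

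The degree bound then finishes the argument. If every $e_i\ge 1$ and $\sum_i e_i\le n$, then every $e_i=1$, so the only surviving monomial is $\prod_{i=1}^n x_i$. Hence the sum equals its coefficient $c_{(1,\ldots,1)}$, which proves the statement.

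There is no real obstacle here. The only points needing care are the convention $0^0=1$ for the constant factor and the fact that the degree bound rules out surviving monomials such as $x_1^2x_2\cdots x_n$, which would otherwise also contribute.
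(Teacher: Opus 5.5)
Your proof is correct, and it takes a more direct route than the paper. The paper first passes to the multilinear representative $f^*$ of $f$ in $\mathbf{Z}_2[x_1,\ldots,x_n]/(x_1^2+x_1,\ldots,x_n^2+x_n)$. It writes each value $v_{f^*}(X)$ as $\sum_{Y\subseteq X} m_{f^*}(Y)$ and applies M\"obius inversion on the subset lattice to read off $m_{f^*}(\{1,\ldots,n\})$ as the parity of the number of nonroots. Only at the very end does it use the degree hypothesis, asserting without detail that the coefficient of $\prod_{i=1}^n x_i$ is the same in $f^*$ and in $f$.

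You skip both $f^*$ and the inversion. You compute $\sum_{a\in\{0,1\}^n} f(a)$ monomial by monomial, and the factorisation $\prod_i \sum_{a_i\in\{0,1\}} a_i^{e_i}$ shows at once that only monomials with every $e_i\ge 1$ survive. The degree bound then forces the single survivor to be $\prod_{i=1}^n x_i$.

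Both arguments rest on the same fact: in characteristic two, M\"obius inversion at the top element $\{1,\ldots,n\}$ is just the sum of all values over the cube. The paper's version yields more, namely that every multilinear coefficient $m_{f^*}(X)$ is the parity of nonroots in the subcube below $X$. Yours is shorter and self-contained. It also makes explicit the step the paper leaves implicit: why the degree bound stops monomials such as $x_1^2x_2\cdots x_n$ from collapsing onto $\prod_{i=1}^n x_i$ under $x_i^2\to x_i$.
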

\begin{proof}
Consider the unique reduced polynomial $f^*$ in $\mathbf{Z}_2[x_1,\ldots,x_n]/(x_1^2+x_1,\ldots, x_n^2+x_n)$ that is pointwise identical to $f$ on $\{0,1\}^n$.
 
Next, consider $X\subseteq [n]$ and let $m_{f^*}(X)$ be the indicator function (coefficient) of the monomial $\prod_{i\in X} x_i$ in $f^*$.
Let $v_{f^*}(X)$ be the value $f^*(a_1,\ldots,a_n)$ in the point $a_i=1$ iff $i\in X$ for all $i=1,\ldots,n$.
Then, we have
\[
v_{f^*}(X)=\sum_{Y\subseteq X} m_{f^*}(Y),
\]
since the value of the polynomial in the point $X$ equals the number of monomials with coefficient one that are one under $X$.
M\"obius inversion on the subset lattice gives the formula
\[
m_{f^*}(X)=\sum_{Y\subseteq X} (-1)^{|X\setminus Y|} v_{f^*}(Y),
\]
from which we see that $m_{f^*}(\{1,\ldots,n\})$, the coefficient of the monomial $\prod_{i=1}^n x_i$ in $f^*$, is equal to the parity of the number of nonroots to $f^*$ and hence also to $f$ since they are pointwise identical on $\{0,1\}^n$. Since the degree of the polynomial is $n$, we have that the coefficient of the monomial $\prod_{i=1}^n x_i$ is the same in $f^*$ and $f$, and the result follows.
\end{proof}
Hence, when P3 applies and we have an odd number of nonroots, the multivariate monomial $\prod_{i=1}^n x_i$ will have coefficient one in the polynomial $f$ as assumed in Theorem~\ref{thm: main}.
\subsection{Overall Cook reduction}
\label{sec: overall}
First note that any problem in NP can be reduced to an NP-complete problem of our choice using known Karp reductions. Let an instance $\mathcal{I}$ to our chosen NP-complete problem be given. 
We next compute the associated probabilistic arithmetic circuit $C_f$ from $\mathcal{I}$ via the compact vanish or odd reduction and feed $C_f$ to algorithm $\mathcal{A}$. We check in polynomial time using $C_f$ that $\mathcal{A}$'s output is indeed a nonroot. If it found a nonroot, we will know from property P2 that $\mathcal{I}$ is a ``yes''-instance. If  algorithm $\mathcal{A}$ did not find a nonroot, we guess our original instance $\mathcal{I}$ was a ``no''-instance.
Note that there are no false positives and the probability of false negatives depends on two events, the event that the probabilistic arithmetic circuit $C_f$ contains the monomial $\prod_{i=1}^n x_i$, as well as the event of success of algorithm $\mathcal{A}$ in Theorem~\ref{thm: main}. The first event happens with probability that is inversely polynomial in $n$ according to property P3 and Lemma~\ref{lem: parity}. Since the algorithm $\mathcal{A}$'s success is independent of this event, the total probability is also inversely polynomial in $n$. Note that by repeating the whole process a large enough polynomial number of times in independent tries, answering affirmative if at least some try finds a nonroot, and negative otherwise, the probability of false negatives can be brought down to $\exp(-n^{O(1)})$.

\subsection{Previous vanishing or odd constructions}
\label{sec: prev vanishing}
It is not too difficult to find reductions that have two out of the three required properties of a compact vanishing or odd reduction. E.g., let us consider two results from the literature to derive a (noncompact) vanishing or odd reduction for a better understanding of where the difficulties lie in getting all three properties.

As our first example, perhaps the most natural approach is to consider the famous construction in Valiant and Vazirani~\cite{ValiantV1986}. It takes an instance $\phi$ to \textsc{CNF3Sat} and produces $n+2$ \textsc{CNF3Sat} instances $\phi_i,1\leq i \leq n+2$ with the property that if $\phi$ is a ``yes''-instance, some $\phi_i$ will have a unique satisfying assignment with probability at least $\frac{1}{8}$. If $\phi$ is a ``no''-instance though, all $\phi_i$ will be unsatisfiable. We can encode the satisfiability question of any of these instances $\phi_i$ as a question of nonroots to a polynomial with nonroots representing satisfiable assignments. Note that such a polynomial for a randomly chosen $\phi_i$ has exactly our vanishing or odd property.  We may use variables $x_1,\ldots,x_n\in \{0,1\}$ to represent the Boolean variables $v_1,\ldots,v_n$ in the instance with $x_1=1$ meaning $v_i$ is true, and use polynomials of degree three to encode the clauses. E.g., the polynomial $1+x_1(1+x_2)x_3$ in $\mathbf{Z}_2[x_1,x_2,x_3]$ is nonzero unless $x_1=1,x_2=0,x_3=1$, corresponding to a Boolean clause $\overline{v_1}\vee v_2 \vee \overline{v_3}$ which is false only if $v_1=T,v_2=F,v_3=T$.
Taking the product of all clauses' polynomials, we obtain an arithmetic circuit of a polynomial in $\mathbf{Z}_2[x_1,\ldots, x_n]$ that vanishes on $\{0,1\}^n$ when there is no satisfying assignment to the original \textsc{CNF3Sat} instance $\phi$. Moreover, when $\phi$ is satisfiable, with probability at least $1/\operatorname{poly}(n)$ the constructed polynomial has exactly one nonroot.  However, the total degree can be three times the number of clauses, which is much larger than $n$. Also, the individual degree is larger than two for most of the variables as NP-hard \textsc{CNF3Sat} instances typically need at least three occurrences of the variables.

As our second example, we will retrace a carefully tailored NP-completeness proof for a similar problem.  Buss, Frandsen, and Shallit~\cite{BussFS1997} showed that the problem of deciding if a mixed matrix (a matrix with scalars and variables) allows an assignment to the variables so that the matrix gets full rank is NP-complete.
It can be formulated as determining if a determinant polynomial
\[
f(x_1,\ldots,x_n)=\operatorname{det}\left(A_0+\sum_{i=1}^n x_iA_i\right),
\]
in $\mathbf{Z}_2[x_1,\ldots, x_n]$, where each $A_i$ is an $m\times m$ $0/1$ matrix, has a nonroot $(a_1,\ldots, a_n)\in \{0,1\}^n$, $f(a_1,\ldots,a_n)\neq 0$.  Furthermore, Harvey, Karger, and Yekhanin~\cite{HarveyKY2006} in their Corollary $3$ proved that this holds even if each $A_i$ for $i>0$ has at most two ones each. While this means that the determinant seen as a polynomial has individual degree of every variable at most two, the total degree of the polynomial is always larger than $n$ in these NP-hardness reductions. To see why we will formulate a version of the reduction idea in~\cite{HarveyKY2006} but expressed directly in terms of multivariate polynomials as opposed to ranks of matrices used in their paper.

Their reduction is made in two steps. The first step does not pay any respect to the individual degree of the variables, whereas the second step is a general method to decrease the individual degree to two in a way that does not increase the total degree of the polynomial more than it increases the number of variables.

The first step of the reduction can be seen as taking as input an arithmetic circuit $C_f$ over $\mathbf{Z}_2$ of a function $f$ on $n$ variables $x_1,\ldots,x_n$ and constructing another circuit $C_{f'}$ over $\mathbf{Z}_2$ of a function $f'$ on $n'$ variables $y_1,\ldots,y_{n'}$, with $n'>n$. The input $C_f$ could e.g. be the output arithmetic circuit of the previous reduction encapsulating the satisfiability of a \textsc{CNF3Sat} formula\footnote{Harvey, Karger, and Yehkanin~\cite{HarveyKY2006} reduced from the problem \textsc{CircuitSAT}.}.
The variables $y_1,\ldots,y_{n'}$ include the original variables, i.e., we can equate $y_i=x_i$ for $i=1,\ldots,n$, and there is one extra variable $y_i,i>n$ per gate in $C_f$ representing the output of the gate. For a multiplication gate $y_c=y_a\times y_b$, we associate the polynomial formula $(y_c+y_ay_b+1)$. Note that it will be $1$ iff $y_c=y_a\times y_b$. Similarly, for an addition gate $y_c=y_a+y_b$, we associate the polynomial formula $(y_c+y_a+y_b+1)$. Again it is one iff $y_c=y_a+y_b$. Now, we simply let $C_{f'}$ be the product of all these associated polynomial formulas for the gates, finally multiplied with $y_s$, the output gate variable of the circuit $C_f$ to ensure it is set to $1$. We now have that the nonroots to $f'$ correspond one-to-one with the nonroots to $f$. This means that $f'$ has both the vanishing and odd properties iff $f$ has.

To analyse the total degree of the construction, let the size of $C_f$ be expressed as $s(C_f)=s^\times(C_f)+s^+(C_f)$, where $s^\times(C_f)$ is the number of multiplication gates in $C_f$ and $s^+(C_f)$ the number of addition gates. The constructed circuit $C_{f'}$ will have
\[
n'=n+s(C_f),
\]
variables and
\[
\operatorname{deg}(f')\leq 2s^\times(C_f)+s^+(C_f)+1.
\]
Hence, to get also the compactness property for $C_{f'}$, i.e. $\operatorname{deg}(f')=n'$, we either need cancellation of high degree monomials to get strict inequality in the degree bound or $s^\times(C_f)=n-1$. Note that circuits with $s^\times(C_f)=n-1$ are quite limited. Constructing a compact vanishing or odd reduction via this circuit embedding technique alone appears challenging.

The second step of the reduction is interesting on its own. It takes any variable $y_c$ of degree $d(c)>2$ and makes as many copies $y_{c,1},\ldots,y_{c,d(c)}$ as the individual degree of the variable $y_c$ in the circuit $C_{f'}$ (this creates $d(c)-1$ new variables). Every occurrence of $y_c$ in the circuit is mapped to its own copied version $y_{c,i}$ of the variables. To make sure all copies take the same value, we form the polynomial $\prod_{i=1}^{d(c)} y_{c,i}+ \prod_{i=1}^{d(c)} (1+y_{c,i})$. It is of degree $d(c)-1$, and is one iff all $y_{c,i}, i=1,\ldots, d(c)$ are equal. By multiplying this polynomial to the original polynomial $f'$, we get a circuit computing a polynomial with more variables and higher degree, but the increase in the total degree and the number of variables are by the same amount. Hence this reduction can be applied to any construction of a polynomial with degree equal to its number of variables while retaining this property.

\section{A compact vanishing or odd reduction}
\label{sec: cvo}
In this section we describe a compact vanishing or odd reduction, which together with the high level proof in Section~\ref{sec: overall} proves our main Theorem~\ref{thm: main}.
Our hardness reduction is from the problem of deciding if a cubic bipartite planar undirected graph has a Hamiltonian cycle, a problem that was proved NP-complete by Akiyama, Nishizeki, and Saito~\cite{AkiyamaNS1980}. For technical reasons that will become clear in what follows, we look for a specific type of Hamiltonian cycle. We require both that it goes through a specific edge $e^*$, and we also give small positive integer weights to the edges of the graph and seek a Hamiltonian cycle of a specific total weight $w^*$ (sum of the edge weights along the cycle). We say a Hamiltonian cycle is \emph{special} if it meets both requirements.

\subsection{Directed cycle covers in bipartite planar graphs}
We are interested in the Hamiltonian cycles through a specified edge in the graph, but it does not matter which one, we can just consider a randomly chosen edge $e^*$. With probability at least $\frac{2}{3}$ it will be part of any Hamiltonian cycle when one exists in the graph since the graph is cubic. To transform the Hamiltonicity detection problem into a question about nonroots of a polynomial, we connect it to an auxiliary problem introduced by Bj\"orklund, Kaski, and Nederlof~\cite{BjorklundKN2024}.  We call it \textsc{Realisable Cycle Cover} here. Before we describe it and prove its relationship to Hamiltonian cycles in a bipartite planar graph in Section~\ref{sec: vccc}, we set up some terminology.

\subsubsection{Graph terminology}
We consider both undirected and directed graphs. For a loopless graph $G=(V,E)$, we let $\{u,v\}$ denote an undirected edge between vertices $u$ and $v$ when $G$ is undirected, whereas $(u,v)$ denotes the directed edge, or \emph{arc} from $u$ to $v$ when $G$ is directed. 
A \emph{directed cycle} $C$ in a graph is a sequence of distinct vertices $(c_0,c_1,\ldots,c_{\ell-1})$ such that $\{c_i,c_{(i+1)\mbox{\tiny mod }\ell}\}\in E$ if the graph is undirected, or $(c_i,c_{(i+1)\mbox{\tiny mod }\ell})\in E$ if the graph is directed.
We write $V(C)$ to refer to the set of vertices $\{c_0,\ldots,c_{\ell-1}\}$ on the cycle.  
A \emph{directed cycle cover} in a graph is a set of vertex disjoint directed cycles such that every vertex is part of some cycle.

\subsubsection{Pfaffian orientations}
In an undirected graph $G=(V,E)$, a \emph{Pfaffian orientation} of $G$ is an orientation $P$ of the edges $\{u,v\}\in E$, i.e., precisely one of the directed arcs $(u,v)$ or $(v,u)$ is in $P$, with a certain parity property: along any directed cycle $C$ in $G$ such that $G[V\setminus V(C)]$ admits a perfect matching, there is an odd number of edges oriented in either of the two directions. Such a cycle $C$ is called \emph{central}. A graph that admits a Pfaffian orientation is called a \emph{Pfaffian} graph. Here we will only use that planar graphs are Pfaffian, and that it is computationally easy to find a Pfaffian orientation in a planar graph, see e.g. Little~\cite{Little1974} for a polynomial time algorithm.
 
 \subsubsection{Realisable cycle covers}
 \label{sec: vccc}
Let $P$ be a Pfaffian orientation of a bipartite undirected graph $G=(V,E)$.
Fix an arc $(s,t)$ in $P$ and let $P_{s,t}$ be obtained from $P$ by reversing the arc $(s,t)$ to $(t,s)$. Finally, let $p_{s,t}:E\rightarrow\{0,1\}$ be the indicator function for an arc being part of $P_{s,t}$. Note that precisely one of $p_{s,t}((u,v))$ and $p_{s,t}((v,u))$ is $1$ for every $\{u,v\}\in E$, and all other points evaluate to zero. 

The key properties of the tuple $(G,P_{s,t})$ that we will describe next were observed in \cite{BjorklundKN2024}. We number the vertices $V(G)$ as $1,\ldots,n$, and we introduce binary variables $x_i$ for all $i=1,\ldots, n$. We say an arc $(u,v)$ in a graph $G$ under $P_{s,t}$ is \emph{realised} by a variable assignment when
\begin{equation}
\label{eq: realisable arc}
x_{u}+x_{v}=p_{s,t}\left((u,v)\right)\mbox{ mod }2.
\end{equation}
A directed cycle $C=(c_0,\ldots,c_{\ell-1})$ is \emph{realisable} in $G$ under $P_{s,t}$ if there exists a variable assignment realising all arcs $\forall i\in\{0,\ldots,\ell-1\}:(c_i,c_{(i+1)\mbox{\tiny mod } \ell})$.

\begin{lemma}
\label{lem: cycle without e*}
Every central cycle that is realisable in $G$ under $P_{s,t}$ must pass through $\{s,t\}$.
\end{lemma}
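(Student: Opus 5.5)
We argue by contraposition: take a central directed cycle $C=(c_0,\ldots,c_{\ell-1})$ in $G$ that avoids the edge $\{s,t\}$, and show it cannot be realisable under $P_{s,t}$. The argument is a parity count of the realisation equations around the cycle.

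First, suppose $C$ is realised by some assignment $x$. Summing equation~(\ref{eq: realisable arc}) over all arcs $(c_i,c_{i+1 \bmod \ell})$ of $C$ gives
\[
\sum_{i=0}^{\ell-1}\left(x_{c_i}+x_{c_{(i+1)\bmod \ell}}\right)\equiv \sum_{i=0}^{\ell-1} p_{s,t}\left((c_i,c_{(i+1)\bmod \ell})\right) \pmod 2 .
\]
Each vertex of $C$ appears exactly twice on the left, so the left side is $0 \bmod 2$. Hence realisability forces the number of arcs of $C$ that agree in direction with $P_{s,t}$ to be even.

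Second, since $C$ avoids $\{s,t\}$, the orientations $P_{s,t}$ and $P$ coincide on every edge of $C$. So the number of arcs of $C$ traversed forward in $P_{s,t}$ equals the number traversed forward in $P$.

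Third, we use that $P$ is Pfaffian and $C$ is central. Then the number of edges of $C$ oriented in either direction is odd. In particular, the number of arcs of $C$ traversed in the direction given by $P$ is odd. This contradicts the even count from the first step, so no realising assignment exists.

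Therefore every realisable central cycle must use $\{s,t\}$. As a sanity check, when $C$ does contain $\{s,t\}$, reversing that one arc flips the parity to even, which is consistent with realisability; this is not needed for the lemma.

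The only delicate point is the third step: making sure the Pfaffian parity condition is read as ``the number of forward-oriented edges along $C$ is odd''. For an even cycle, which is the only kind in a bipartite $G$, ``odd in one direction'' is equivalent to ``odd in the other'', since the two counts sum to $\ell$, which is even. The definition's phrasing ``an odd number of edges oriented in either of the two directions'' then applies unambiguously.
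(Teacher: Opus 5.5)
Your proof is correct and is essentially the paper's argument. You sum the realisation equations around $C$ and note that the left side is even because each vertex appears twice. The right side is odd because $P_{s,t}$ agrees with $P$ on every edge of $C$, and $P$ is Pfaffian while $C$ is central. You also make explicit two points the paper leaves implicit: that $P_{s,t}$ coincides with $P$ away from $\{s,t\}$, and that the parity condition does not depend on the traversal direction for even cycles. Both additions are accurate.
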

\begin{proof}
We sum the left and right sides of the equations in~\eqref{eq: realisable arc} for all arcs on the cycle to reach a contradiction.
First, consider the right side sum, $\sum_{i=0}^{\ell-1} p_{s,t}((c_i,c_{(i+1)\mbox{\tiny mod }\ell}))$. It is odd when $\{s,t\}$ is not on the central cycle, 
because of the property of a Pfaffian orientation $P$. Second, the left side sum, $\sum_{i=0}^{\ell-1} x_{c_i}+x_{c_{(i+1)\mbox{\tiny mod }\ell}}$ is always even, because each vertex variable is counted twice. Hence it is impossible to simultaneously realise all arcs on the cycle.
\end{proof}
We say a directed cycle cover is realisable in $G$ under $P_{s,t}$ iff all its cycles are.
\begin{lemma}
\label{lem: realisable}
A directed cycle cover is realisable in $G$ under $P_{s,t}$ iff it consists of a Hamiltonian cycle through $\{s,t\}$.
\end{lemma}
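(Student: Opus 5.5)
We prove the two directions separately. Both rest on two facts: Lemma~\ref{lem: cycle without e*}, and the observation that every cycle of a directed cycle cover is central.

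\emph{Only if.} Let $\mathcal{C}$ be a realisable directed cycle cover of the bipartite graph $G$. Since $G$ is bipartite, every cycle in $\mathcal{C}$ has even length. We first show that each cycle $C\in\mathcal{C}$ is central. The vertex set $V\setminus V(C)$ is covered by the remaining cycles of $\mathcal{C}$. Each of these is an even cycle, so it splits into two perfect matchings of its own vertices; taking every other edge of each remaining cycle therefore gives a perfect matching of $G[V\setminus V(C)]$.

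By Lemma~\ref{lem: cycle without e*}, every cycle of $\mathcal{C}$ passes through the edge $\{s,t\}$. The cycles are vertex disjoint and every one of them contains the vertices $s$ and $t$, so at most one cycle can exist. Hence $\mathcal{C}$ consists of a single cycle, which covers all vertices: it is a Hamiltonian cycle through $\{s,t\}$.

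\emph{If.} Let $H=(c_0,\ldots,c_{\ell-1})$ be a directed Hamiltonian cycle using $\{s,t\}$. We must produce an assignment realising every arc of $H$. Set $x_{c_0}=0$ and define $x_{c_{i+1}}=x_{c_i}+p_{s,t}((c_i,c_{i+1}))$ mod $2$ for $i=0,\ldots,\ell-2$. Every arc except the closing one $(c_{\ell-1},c_0)$ is then realised by construction. The closing arc is realised precisely when $\sum_{i}p_{s,t}((c_i,c_{(i+1)\bmod \ell}))$ is even, since the left-hand sides sum to an even number.

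This parity is the main point. $H$ is a central cycle, because $V\setminus V(H)=\emptyset$ and the empty matching is perfect. By the Pfaffian property, the number of arcs of $H$ traversed in the direction given by $P$ is therefore odd. Now $p_{s,t}$ agrees with the indicator of $P$ on every edge except $\{s,t\}$, where it is flipped. Because $H$ uses $\{s,t\}$ exactly once, the count under $P_{s,t}$ differs from the count under $P$ by exactly one. The count under $P_{s,t}$ is thus even, the closing arc is realised, and $H$ is realisable.

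The step needing the most care is the centrality of each cycle in the ``only if'' direction, which is where bipartiteness is needed for even cycle lengths. There is also a subtlety about $2$-cycles: a pair of opposite arcs $(u,v),(v,u)$ on one edge would count as a cycle of length two. If the paper's definition of a directed cycle admits these, we also check them separately. Such a $2$-cycle is central, because the rest of the cover supplies a matching. It has exactly one arc in each direction, so under $P_{s,t}$ it has an odd count unless it lies on the edge $\{s,t\}$. Lemma~\ref{lem: cycle without e*} then still forces it through $\{s,t\}$, and the single-cycle argument above gives a contradiction unless the cycle is the whole graph, which happens only when $n=2$.
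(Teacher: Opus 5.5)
Your proof is correct and follows the paper's. The only-if direction is the same argument: every cycle of the cover is central via every-other-edge matchings of the remaining even cycles, and then Lemma~\ref{lem: cycle without e*} together with vertex-disjointness forces a single cycle. The if direction uses the same propagation of values along the cycle. Your Pfaffian parity check that the closing arc is also realised supplies a step the paper's proof leaves unstated.

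One slip in your side remark: a 2-cycle on $\{s,t\}$ still has exactly one arc, namely $(t,s)$, in $P_{s,t}$, so no 2-cycle is ever realisable. Consequently, for $n=2$ the if direction holds only under the usual convention that Hamiltonian cycles have length at least three, which is immaterial for the cubic graphs the paper uses.
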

\begin{proof}
(if:) Given a directed Hamiltonian cycle $h=(s=h_0,\ldots,h_{n-1}=t)$ in $G$ through $\{s,t\}$, we can find an asisgnment $a_1,\ldots,a_n$ to $x_1,\ldots,x_n$ that realises all arcs on the cycle, by setting $a_s=0$, and $a_{h_{i+1}}=a_{h_i}+p((h_i,h_{i+1}))$ for $i=0,\ldots, n-2$. (The solution is not unique, by setting $a_s=1$ we obtain another one.)

(only if:) Every cycle in a directed cycle cover in a bipartite graph is central, as one can always form a perfect matching of the vertices not on the cycle by pairing all vertices by every other arc along the cycles. This pairing is always possible since all cycles have even length in a bipartite graph. From Lemma~\ref{lem: cycle without e*}, every cycle in a cycle cover must go through $\{s,t\}$. Since the cycles in a cycle cover are vertex and hence edge disjoint, the cycle cover must consist of a single cycle through $\{s,t\}$. 
\end{proof}

The problem \textsc{Realisable Cycle Cover} is given a bipartite planar graph $G$, a Pfaffian orientation $P$, and an edge $\{s,t\}$ with $(s,t)\in P$, to detect if $G$ has a realisable cycle cover under $P_{s,t}$.
Combining Lemma~\ref{lem: realisable} with the NP-hardness of detecting if a cubic bipartite planar graph has a Hamiltonian cycle~\cite{AkiyamaNS1980} we have the following: 

\begin{lemma}  \textsc{Realisable Cycle Cover} is NP-hard.
\end{lemma}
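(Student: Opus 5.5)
We prove the lemma by a polynomial time Karp reduction from Hamiltonicity in cubic bipartite planar graphs, which Akiyama, Nishizeki, and Saito~\cite{AkiyamaNS1980} showed NP-complete. The main tool is Lemma~\ref{lem: realisable}. It tells us that a realisable directed cycle cover under $P_{s,t}$ is exactly a directed Hamiltonian cycle through $\{s,t\}$. So the task reduces to turning the question ``does $G$ have a Hamiltonian cycle?'' into polynomially many questions of the form ``does $G$ have a Hamiltonian cycle through a given edge $\{s,t\}$?''.

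Let $G$ be a cubic bipartite planar graph. First compute a Pfaffian orientation $P$ of $G$ in polynomial time, for instance with Little's algorithm~\cite{Little1974}; this is possible because $G$ is planar. Then fix any vertex $v$. Every Hamiltonian cycle uses exactly two of the three edges at $v$, so $G$ is Hamiltonian iff for some edge $\{v,u\}$ incident to $v$ there is a Hamiltonian cycle through $\{v,u\}$.

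For each of the three edges $\{v,u\}$, write it as the arc $(s,t)\in P$ according to its orientation. Form the instance $(G,P,\{s,t\})$ of \textsc{Realisable Cycle Cover}. By Lemma~\ref{lem: realisable}, this instance is a ``yes''-instance iff $G$ has a Hamiltonian cycle through $\{s,t\}$. Here a directed Hamiltonian cycle and an undirected one correspond, since either orientation of the cycle serves. Hence $G$ is Hamiltonian iff at least one of the three instances is a ``yes''-instance.

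This gives a polynomial time truth-table (OR of three queries) reduction. That already establishes NP-hardness under Cook reductions, which is all that the overall argument of Section~\ref{sec: overall} needs.

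If a Karp reduction is wanted instead, there are two options. One is to pick $e^*$ uniformly at random, which succeeds with probability at least $2/3$ as noted above. The other is to merge the three queries into one via a standard gadget, though this is unnecessary for our purposes.

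The only delicate point is the correctness of Lemma~\ref{lem: realisable} itself. In particular it relies on the bipartiteness of $G$, which makes every cycle of a cycle cover central. That lemma is already established, so nothing further is required.
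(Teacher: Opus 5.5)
Your proposal is correct and follows the paper's argument. The paper, like you, simply combines Lemma~\ref{lem: realisable} with the NP-hardness result of Akiyama, Nishizeki, and Saito~\cite{AkiyamaNS1980}. The only difference is how $\{s,t\}$ is chosen. The paper picks $e^*$ at random, which lies on a Hamiltonian cycle with probability at least $2/3$ since $G$ is cubic. You instead query the three edges at a fixed vertex, a deterministic OR-of-three reduction that serves the one-sided-error argument of Section~\ref{sec: overall} equally well.
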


\subsection{Multivariate polynomial embedding}

We will next see how \textsc{Realisable Cycle Cover} can be encoded in a multivariate polynomial with nonroots corresponding to solutions. We will use a matrix determinant over the field $\mathbf{Z}_{2}$ to define the polynomial. Leibniz formula for the determinant of a square $n\times n$ matrix $M$ consider all possible cycle cover of a graph described by $M$:
\begin{equation}
\label{eq: Leibniz}
\operatorname{det}(M)=\sum_{\sigma}\operatorname{sgn}(\sigma)\prod_{i=1}^nM_{i,\sigma(i)},
\end{equation}
where the summation is over all permutations $\sigma$ of $1,\ldots,n$. Here the sign function $\operatorname{sgn}(\sigma)$ takes values $-1$ or $1$ depending on the sign of the permutation $\sigma$, but as we are working in $\mathbf{Z}_2$, these are all one. We think of the rows and columns as labelled by the vertices of $G$, and an entry $M_{u,v}$ is nonzero only if $\{u,v\}\in E(G)$, and it encodes the directed arc $(u,v)$. From the formula, we see that all summands can be viewed as directed cycle covers in $G$, where each cycle cover contributes the product of all the $M_{u,v}$ for each arc $(u,v)$ in the cycle cover.
 
We will also use an auxiliary variable $y$ that we need to track the total weight of the Hamiltonian cycles. Our final polynomial computed by the arithmetic circuit will \emph{not} contain the variable $y$ however, it is only used in the construction and analysis of the circuit. We assume there are small positive integer weights $w:E\rightarrow [1,2,\ldots,b]$ on each edge (the edge weights will be set below in property P3 where they are needed).

We next define the matrix. We introduce an asymmetry around the special edge $\{s,t\}$ to make sure that only certain variable assignments, those with $x_s=0$ and indirectly $x_t=1$, will possibly be nonroots to our polynomial. We will see why this matrix definition works in Sections~\ref{sec: P2} and~\ref{sec: P3}. 

\begin{equation}
\label{eq: M_e}
M_{s,t}(x)_{u,v}=\left\{\begin{array}{ll} y^{w(\{t,s\})}(x_s+1) & : u=t,v=s, \\
y^{w(\{u,v\})}(x_u+x_v+p_{s,t}((u,v))+1) & :\{u,v\}\in E\setminus \{s,t\},
 \\ 0 & :\mbox{otherwise.} \end{array} \right.
\end{equation}

We use the notation $[y^k]p(y)$ to refer to the coefficient of $y^k$ in a polynomial $p(y)$. Our polynomial $f$, computed by the circuit $C_f$ we will construct, will be

\begin{equation}
\label{eq: f}
f(x_1,\ldots,x_n)=[y^{w^*}]\operatorname{det}(M_{s,t}(x)).
\end{equation}

 We proceed to prove that $f$ as defined in \eqref{eq: f} has the three properties P1--P3 of a compact vanishing or odd reduction polynomial.
 
\subsubsection{Property P1, compactness}
\label{sec: P1}
We verify that the total degree $\operatorname{deg}(f)$ of $f$ in \eqref{eq: f} is at most $n$, and that the individual degree of any variable is at most two.
There are $n$ variables $x_1,\ldots, x_n$ in our $f$ and it is defined via the determinant of an $n\times n$ matrix $M(x)$ with linear polynomials on the form $c_0+\sum_i c_ix_i$ as entries, where $c_i$ are constants in the field. From the Leibniz formula~\eqref{eq: Leibniz} for the determinant, that computes a sum over products of one entry per row and column, it is clear that any monomial has degree at most $n$ in the $x_1,\ldots,x_n$ variables. This establishes $\operatorname{deg}(f)\leq n$.

Every nonzero summand in the Leibniz formula~\eqref{eq: Leibniz} for $\operatorname{det}(M_{s,t}(x))$ comes from a permutation $\sigma$ that describes a directed cycle cover in the underlying graph $G$. This means in particular that every vertex is part of exactly two arcs (the matrix $M_{s,t}(G)$ in \eqref{eq: M_e} is zero on the diagonal so there are no loops), and since every variable $x_u$ is only part of the matrix entries corresponding to arcs that are incident to vertex $u$, we know that no monomial in the polynomial can have an individual degree larger than two for any of the variables $x_1,\ldots, x_n$.

\subsubsection{Property P2, vanishing negatives}
\label{sec: P2}
We verify that $f$ in \eqref{eq: f} has a nonroot only if $G$ has a Hamiltonian cycle.

\begin{lemma}  
\label{lem: vanishing}
If there exists a nonroot $(a_1,\ldots,a_n)\in\{0,1\}^n$,$f(a_1,\ldots,a_n)\neq 0$, then $G$ has a Hamiltonian cycle through $e^*=\{s,t\}$.
\end{lemma}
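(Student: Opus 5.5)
Suppose $a\in\{0,1\}^n$ is a nonroot of $f$. Since $f(a)=[y^{w^*}]\operatorname{det}(M_{s,t}(a))$, the univariate polynomial $\operatorname{det}(M_{s,t}(a))\in\mathbf{Z}_2[y]$ must be nonzero. We expand it with the Leibniz formula~\eqref{eq: Leibniz}. Every nonzero summand corresponds to a directed cycle cover $\sigma$ of $G$ for which all entries $M_{s,t}(a)_{u,\sigma(u)}$ are nonzero. If the determinant were zero for every such cover, the whole polynomial in $y$ would vanish. Hence at least one directed cycle cover $\sigma$ has all its arc entries nonzero under $a$. (We cannot control cancellations in the sum, but we need only one surviving term, and a nonzero sum certainly has one.)

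Next we read off what a nonzero entry means. For an arc $(u,v)$ with $\{u,v\}\neq\{s,t\}$, the entry is $y^{w}(a_u+a_v+p_{s,t}((u,v))+1)$. This is nonzero exactly when $a_u+a_v=p_{s,t}((u,v))$, i.e.\ when the arc is realised in the sense of~\eqref{eq: realisable arc}.

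For the arcs on the special edge we look at the definition~\eqref{eq: M_e}. The arc $(s,t)$ has entry $0$. This is the point to check carefully, since the second case ranges over $\{u,v\}\in E\setminus\{s,t\}$ and so excludes both orientations of $\{s,t\}$; only $(t,s)$ gets the special entry. The arc $(t,s)$ has entry $y^{w}(a_s+1)$, which is nonzero iff $a_s=0$. In $P_{s,t}$ the arc $(t,s)$ has $p_{s,t}((t,s))=1$, so realising it would require $a_s+a_t=1$. This is not immediate from $a_s=0$ alone.

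So we argue through the cycles instead of checking this arc directly.
\begin{itemize}
\item Every cycle of $\sigma$ that avoids the edge $\{s,t\}$ consists only of realised arcs, so it is realisable. Every cycle in a cycle cover of a bipartite graph is central, as in the proof of Lemma~\ref{lem: realisable}. Lemma~\ref{lem: cycle without e*} then says such a cycle cannot exist.
\item Hence every cycle of $\sigma$ passes through $\{s,t\}$. The cycles are vertex-disjoint, so there is exactly one cycle, and it is Hamiltonian.
\item That cycle must use $\{s,t\}$ in the direction $(t,s)$, because $(s,t)$ has entry zero. Even so, we only need the traversal, not the realisation of $(t,s)$: the conclusion is that $G$ has a Hamiltonian cycle through $e^*=\{s,t\}$.
\end{itemize}

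The main obstacle is the bookkeeping at the special edge: making sure that $(s,t)$ contributes zero, and that the parity argument of Lemma~\ref{lem: cycle without e*} is applied only to the other cycles, whose arcs are all realised. Once that is settled, the proof is the Leibniz-term argument above combined with Lemmas~\ref{lem: cycle without e*} and~\ref{lem: realisable}.
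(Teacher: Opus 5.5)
Your proof is correct and follows the same route as the paper. You pick a surviving Leibniz term, note that every arc entry other than the one for $(t,s)$ is nonzero exactly when the arc is realised, rule out any cycle avoiding $\{s,t\}$ via centrality and Lemma~\ref{lem: cycle without e*}, and let vertex-disjointness leave a single Hamiltonian cycle. Restricting the parity argument to the cycles that avoid $\{s,t\}$ is slightly more careful than the paper's direct appeal to Lemma~\ref{lem: realisable}, since the arc $(t,s)$ in the surviving cover need not be realised.
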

\begin{proof}
The proof is seen via Leibniz formula for the determinant~\eqref{eq: Leibniz}. Every contributing permutation corresponds to a cycle cover in the original graph $G$, as only entries in the matrix $M_{s,t}(x)$ corresponding to arcs in the graph $G$ are nonzero. Furthermore, each nonzero entry in the matrix except the one for the arc $(t,s)$ encodes the formula \eqref{eq: realisable arc} for realisability, meaning that the entry is nonzero only if the corresponding arc is realised by the assignment $a_1,\ldots,a_n$. From Lemma~\ref{lem: cycle without e*} and Lemma~\ref{lem: realisable}, we see that there cannot be a cycle cover with a directed cycle not passing through $e^*$ regardless of the assignment to the variables $x_1,\ldots,x_n$. Hence, all contributing cycle covers in the determinant must be Hamiltonian cycles in $G$ through $e^*$.
\end{proof}

\subsubsection{Property P3, odd positives}
\label{sec: P3}
We verify that when $G$ has a Hamiltonian cycle, with probability at least $1/\operatorname{poly}(n)$ there is an odd number of nonroots.

We need a construction that makes it likely that a ``yes''-instance corresponds to a polynomial with an odd number of nonroots. One remaining problem towards accomplishing this, is that a cubic graph always has an even number of Hamiltonian cycles through every edge (Smith's theorem, see Tutte~\cite{Tutte1946}).  To circumvent this, we need to introduce some randomness to break the structure while retaining properties P1 and P2. One way is to use the well-known isolating lemma by Mulmuley, Vazirani, and Vazirani~\cite{MulmuleyVV1987}:

\begin{lemma}[Isolating Lemma, \cite{MulmuleyVV1987}]
Let $b$ be a positive integer, let $U$ be a finite universe, and let $\mathcal{F}$ be an arbitrary nonempty family of subsets of $U$.
Suppose each element $x\in U$ gets an independent and uniformly sampled weight $w(x)\in \{1,2,\ldots,b\}$, and define the weight of $f\in \mathcal{F}$ as $w(f)=\sum_{x\in f} w(x)$. Then with probability at least $1-\frac{|U|}{b}$ there is a unique $f\in\mathcal{F}$ that has the minimum weight among all sets of $\mathcal{F}$.
\end{lemma}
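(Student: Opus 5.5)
We prove the Isolating Lemma by the standard union bound over elements of $U$, charging any failure of uniqueness to a single \emph{ambiguous} element. Fix the family $\mathcal{F}$ and sample the weights $w$. Call an element $x\in U$ \emph{ambiguous} under $w$ if there exist two minimum-weight sets $f_1,f_2\in\mathcal{F}$ with $x\in f_1$ and $x\notin f_2$.

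First we show that if the minimum-weight set is not unique, some element is ambiguous. Take two distinct minimum-weight sets $f_1\neq f_2$. Some element $x$ lies in their symmetric difference, say $x\in f_1\setminus f_2$ (otherwise swap the names). Then $x$ is ambiguous by definition. Consequently
\[
\Pr[\text{no unique minimiser}]\leq \sum_{x\in U}\Pr[x \text{ is ambiguous}],
\]
and it suffices to show $\Pr[x\text{ is ambiguous}]\leq 1/b$ for each fixed $x$.

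To bound this probability, fix $x$ and condition on the weights $w(z)$ for all $z\neq x$. Define
\[
A=\min\{w(f): f\in\mathcal{F},\, x\notin f\},\qquad B=\min\{w(f\setminus\{x\}): f\in\mathcal{F},\, x\in f\},
\]
with the convention $\min\emptyset=\infty$. Both $A$ and $B$ are determined by the conditioned weights alone. The minimum over sets containing $x$ is $B+w(x)$, and the minimum over sets avoiding $x$ is $A$. If $x$ is ambiguous, then both minima equal the global minimum, so $A=B+w(x)$, i.e.\ $w(x)=A-B$. In particular $A$ and $B$ must both be finite. Since $w(x)$ is independent of the other weights and uniform on $\{1,\ldots,b\}$, it equals the fixed value $A-B$ with probability at most $1/b$. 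Averaging over the conditioned weights gives $\Pr[x\text{ is ambiguous}]\leq 1/b$.

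Summing over $x\in U$ shows that uniqueness fails with probability at most $|U|/b$. Since $\mathcal{F}$ is nonempty and finite, a minimiser always exists, so a unique minimiser exists with probability at least $1-|U|/b$.

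The only delicate point is the conditioning step. The quantities $A$ and $B$ must be defined without reference to $w(x)$, which the definition of $B$ via $f\setminus\{x\}$ ensures. The edge cases where one of the two subfamilies is empty are harmless, because in that case $x$ cannot be ambiguous at all.
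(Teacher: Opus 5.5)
Your proof is correct. The paper does not prove this lemma and only cites Mulmuley, Vazirani, and Vazirani; your argument is their standard threshold proof, with $A-B$ as the threshold at which $x$ switches from lying in some minimum-weight set to lying in none, followed by the union bound over $U$.
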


In our application of the above Lemma, we let $U=E$ and the family $\mathcal{F}$ be the Hamiltonian cycles through $e^*$ in $G$. The Lemma now says that if we set the weight bound $b=2|E|$, say, and set our weights $w((u,v))$ for edges $(u,v)\in E(G)$ uniformly and independently at random from $[1,\ldots,b]$, then with probability at least $\frac{1}{2}$ the Hamiltonian cycle with the smallest weight $w_{\mbox{\tiny min}}$ will be unique. To encode the weights we will use powers of the auxiliary $y$-variable as a factor of each arc's encoding in the matrix, cf. \eqref{eq: M_e}. Since $y^ay^b=y^{a+b}$, the monomial of $y^{w^*}$ in the determinant of $M_{s,t}(x)$ will count the contributions of Hamiltonian cycles in $G$ of weight $w^*$ only.

\begin{lemma}
\label{lem: odd monomial}
If there is a single special Hamiltonian cycle through the edge $e^*=\{s,t\}$ with weight $w^*$, then there is exactly one nonroot to $f$.  
\end{lemma}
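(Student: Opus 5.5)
The plan is to evaluate $f$ pointwise on $\{0,1\}^n$ via the Leibniz formula~\eqref{eq: Leibniz}. For every $a\in\{0,1\}^n$, I will show that $f(a)$ equals the number, taken mod $2$, of special Hamiltonian cycles $H$ whose canonical realising assignment is $a$. The claim then follows at once when there is a single special cycle.

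\textbf{Step 1 (which covers contribute).} Fix $a\in\{0,1\}^n$ and substitute it into $M_{s,t}(x)$. Each entry becomes either $0$ or a pure power $y^{w(\{u,v\})}$. Hence each permutation $\sigma$ contributes either $0$ or $y^{W(\sigma)}$, where $W(\sigma)$ is the sum of the edge weights used by the cycle cover. Consequently $f(a)=[y^{w^*}]\det M_{s,t}(a)$ is the parity of the number of directed cycle covers of weight exactly $w^*$ all of whose entries are nonzero at $a$. As in the proof of Lemma~\ref{lem: vanishing}, I will use Lemma~\ref{lem: cycle without e*}: every nonzero entry other than $(t,s)$ encodes a realised arc, and every cycle in a cycle cover of a bipartite graph is central. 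So any cycle avoiding $\{s,t\}$ has a zero entry. Thus only directed Hamiltonian cycles through $\{s,t\}$ survive.

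\textbf{Step 2 (orientation and uniqueness of the realising point).} By~\eqref{eq: M_e}, the entry at position $(s,t)$ is $0$: the second line excludes the edge $\{s,t\}$, and only the arc $(t,s)$ is given a nonzero entry. Each undirected Hamiltonian cycle $H$ through $e^*$ therefore contributes only via its orientation $(s=h_0,h_1,\ldots,h_{n-1}=t,s)$, which uses the arc $(t,s)$. That entry is nonzero iff $a_s=0$. The remaining $n-1$ arcs form a Hamiltonian path from $s$ to $t$. Requiring each of them to be realised, i.e.\ $a_{h_{i+1}}=a_{h_i}+p_{s,t}((h_i,h_{i+1}))$, together with $a_s=0$, determines $a$ uniquely, exactly as in the ``if'' direction of Lemma~\ref{lem: realisable}. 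Call this point $a_H$. No constraint remains on the arc $(t,s)$ beyond $a_s=0$, so the directed cycle $H$ contributes $y^{w(H)}$ exactly at $a=a_H$ and $0$ elsewhere.

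\textbf{Step 3 (conclusion).} Combining Steps 1 and 2 gives $f(a)=\bigl|\{H \text{ special}: a_H=a\}\bigr| \bmod 2$. If there is exactly one special cycle $H_0$, then $f(a_{H_0})=1$. Moreover $f(a)=0$ for every $a\neq a_{H_0}$, since Hamiltonian cycles of weight different from $w^*$ are discarded by the coefficient extraction $[y^{w^*}]$. Hence there is exactly one nonroot.

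The step needing the most care is Step 2. First, I must check that the reverse orientation is genuinely killed by the zero entry at $(s,t)$, reading~\eqref{eq: M_e} correctly. Second, I must confirm that the realisation equations along the path leave no slack and no extra consistency condition. No extra condition arises because the closing arc $(t,s)$ only tests $x_s$; the parity obstruction of Lemma~\ref{lem: cycle without e*} is exactly what this asymmetric closing entry sidesteps.
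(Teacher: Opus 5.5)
Your proof is correct and follows the paper's argument. You expand via the Leibniz formula and use Lemma~\ref{lem: cycle without e*} to keep only Hamiltonian cycles through $\{s,t\}$ oriented along the arc $(t,s)$. The entry at $(t,s)$ then forces $a_s=0$, and the remaining coordinates are forced along the path from $s$ to $t$. Your intermediate identity, that $f(a)$ is the parity of $|\{H \text{ special}: a_H=a\}|$, simply makes explicit what the paper concludes at the end of its proof.
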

\begin{proof}
As in the proof of Lemma~\ref{lem: vanishing}, we know a contributing cycle cover consists of a Hamiltonian cycle through $\{s,t\}$. Moreover, it must pass through the arc $(t,s)$, since the entry corresponding to the oppositely directed arc $(s,t)$ in $M_{s,t}(x)$ is zero. Inserting \eqref{eq: M_e} in \eqref{eq: Leibniz}, we get
\begin{equation}
 \operatorname{det}(M_{s,t}(a))=\sum_{\substack{\sigma\\\{v,\sigma(v)\}\in E}}y^{w(\{t,s\})}(a_s+1)\prod_{v\in V\setminus\{t\}} y^{w(\{v,\sigma(v)\})}(a_v+a_{\sigma(v)}+p_{s,t}((v,\sigma(v))+1)).
\end{equation}
We see that only assignments $a_1,\ldots, a_n$ to the variables $x_1,\ldots,x_n$ that is a solution to the \textsc{Realisable Cycle Cover} problem with $a_s=0$ in the subgraph of $G$ represented by $M_{s,t}(x)$'s nonzero entries, contribute to the determinant's value since each factor in the inner product needs to be nonzero. According to Lemma~\ref{lem: realisable}, $\sigma$ needs to be a Hamiltonian cycle in $G$ through $e^*=\{s,t\}$, and we only filter out the Hamiltonian cycles of weight $w^*$. In particular, for such a cycle, if $a_s=0$ is fixed, all the remaining $a_i$ are forced by following the cycle starting from $s$, i.e, if $h=(s=h_0,\ldots,h_{n-1}=t)$ is a special Hamiltonian cycle, we can set $a_s=a_{h_0}=0$, and $a_{h_{i+1}}=a_{h_i}+p((h_i,h_{i+1}))$ for $i=0,\ldots, n-2$. Hence, if there is a unique special Hamiltonian cycle through $e^*$ in $G$ of weight $w^*$, there is exactly one assignment $a$ that makes a nonzero contribution, meaning that $f(a)\neq 0$ for that assignment and it is zero everywhere else.
\end{proof}

Next, if we guess $w^*=w_{\mbox{\tiny min}}$ from the range $[1,\ldots,bn]$ and pick an $e^*$ on this Hamiltonian cycle, we get both a unique special Hamiltonian cycle and a single nonroot, an odd number of nonroots, with probability at least $1/\operatorname{poly}(n)$, i.e., property P3.

\subsection{The arithmetic circuit construction}
\label{sec: C_f}
We will next describe how we can build an arithmetic circuit $C_f$ for $f$ in \eqref{eq: f}.
Berkowitz~\cite{Berkowitz1984} showed that there is an algebraic circuit that computes a determinant of an $m\times m$ matrix over any commutative ring with unity in $O(m^4)$ ring operations. We first apply this construction for a computation over the polynomial ring $\mathbf{Z}_2[x_1,\ldots,x_n,y]$. I.e., our first arithmetic circuit $C^*$ computes 
\[
\operatorname{det}(M_{s,t}(x)),
\]
a polynomial not only in $x_1,\ldots,x_n$, but also in $y$. We will next see how to get rid of the $y$'s and obtain an arithmetic circuit $C_f$ computing the polynomial $f$ in $\mathbf{Z}_2[x_1,\ldots,x_n]$ with our properties P1--P3. First note that each gate node $\alpha$ in the circuit $C^*$ computes a polynomial $f^{(\alpha)}$ in $x_1,\ldots,x_n,y$ as
\[
\alpha=\sum_{i=0}^{w^*} y^if^{(\alpha)}_i(x_1,\ldots,x_n).
\] 
If there are even higher degree monomials in $y$ we can safely discard them, as both addition and multiplication cannot decrease the degree of a monomial and in the final output we are only interested in the monomial with $y^{w^*}$.
Split each gate node $\alpha$ into $w^*+1$ copies $\alpha_i=y^if^{(\alpha)}_i(x_1,\ldots,x_n)$ for $i=0,\ldots,w^*$ and build a new circuit with gate nodes $\alpha'_i=[y^i]\alpha_i$. Now note that if $\alpha$ was a sum gate node $\alpha=\beta+\gamma$ in the original circuit, then we can set $\alpha'_i=\beta'_i+\gamma'_i$ for all $i$ in the new circuit, and if $\alpha$ was a product gate node $\alpha=\beta \times \gamma$, we can insert the formula $\alpha'_i=\sum_{j=0}^i \beta'_j\times \gamma'_{i-j}$ in the new arithmetic circuit. The size of the new circuit is increased at most by a factor $w^*$ from the old one, and $w^*$ is at most $2|E||V|$ in the original graph's size parameters, a polynomial in $n$. From the original output node $\omega$ in the circuit, we identify $\omega'_{w^*}$ as the output of the new arithmetic circuit and discard all the other outputs and any sub gate nodes that were only needed for their computation. This completes our construction of $C_f$.

\section*{Acknowledgments}
We thank the anonymous reviewers of an earlier manuscript as well as a group of researchers at ITU Copenhagen attending an informal presentation of the result for several comments improving the presentation. 
\bibliographystyle{plainurl}
\bibliography{cn-np-hard}
\end{document}